\newtheorem{theorem}{Theorem}
\newtheorem{lemma}{Lemma}
\newtheorem{assumption}{Assumption}
\begin{document}

\begin{frontmatter}

\title{Complexity of interval minmax regret scheduling on parallel identical machines with total completion time criterion}

\author[md]{Maciej Drwal}\ead{maciej.drwal@pwr.edu.pl}
\author[rr]{Roman Rischke}\ead{rischke@ma.tum.de}

\address[md]{Department of Computer Science, Wroc\l{}aw University of Technology,\\ Wybrze\.ze Wyspia\'nskiego 27, 50-370 Wroc\l{}aw, Poland}
\address[rr]{Center for Mathematics, Technische Universit{\"a}t M{\"u}nchen,\\ Boltzmannstr. 3, 85747 Garching bei M{\"u}nchen, Germany}

\begin{abstract}
	In this paper, we consider the problem of scheduling jobs on parallel identical machines, where the processing times of jobs are uncertain: only interval bounds of processing times are known. The optimality criterion of a schedule is the total completion time. In order to cope with the uncertainty, we consider the maximum regret objective and we seek a schedule that performs well under all possible instantiations of processing times. Although the deterministic version of the considered problem is solvable in polynomial time, the minmax regret version is known to be weakly NP-hard even for a single machine, and strongly NP-hard for parallel unrelated machines. In this paper, we show that the problem is strongly NP-hard also in the case of parallel identical machines.
	
\end{abstract}

\begin{keyword}
	robust optimization; scheduling; uncertainty; computational complexity
\end{keyword}
	
\end{frontmatter}

\section{Introduction}

Robust optimization has been applied to many combinatorial optimization problems, since in practical applications input data to most problems can be rarely given precisely. This is true in the context of scheduling, as in many actual execution environments (e.g., computer systems, transportation, manufacturing) processing times of tasks are not known exactly, but their values can fluctuate within certain bounds. Moreover, very often no good assumptions can be made even regarding their probability distributions. In such circumstances we would like to find a solution that is the best in the worst possible scenario of events. Such solutions can be characterized in terms of the maximum regret criterion \cite{kouvelis1997robust, aissi2009min, kasperski2008discrete}. Solutions that minimize the maximum regret are often much more reliable than the ones obtained by ignoring parameter uncertainty. However, in many cases, finding a robust solution for uncertain data is more difficult and may require more computational resources.

We apply the minmax regret approach to the problem of scheduling on parallel identical machines to minimize the total completion time (sum of the completion times of all jobs) with interval uncertainty in the job processing times. The problem under consideration is denoted \textsc{interval minmax regret}~$P||\sum C_i$, using the notation from standard scheduling theory. The deterministic version of this problem can be solved in polynomial time by applying the {\it shortest processing time first} rule \cite{pinedo2012scheduling}. However, its minmax regret version becomes NP-hard even for a single machine, i.e., \textsc{interval minmax regret}~$1||\sum C_i$. In \cite{lebedev2006complexity} it is shown that even when the midpoints of all intervals are equal and the number of jobs is odd, finding an optimal robust sequence on a single machine is weakly NP-hard. Surprisingly, for an even number of jobs this problem is polynomially solvable. Thus the case in which the number of machines is given as part of the input can be no easier. Recently, Conde \cite{conde2014mip} indicated a simple reduction from the minmax regret assignment problem \cite{aissi2005complexity} of $m$ jobs to $m$ machines, which implies that in case of $m$ parallel unrelated machines (\textsc{interval minmax regret}~$R || \sum C_i$) the problem is strongly NP-hard. However, this reduction does not suffice to prove hardness for the case of parallel identical machines. In this paper, we extend the aforementioned complexity results by showing that the problem is strongly NP-hard even on identical parallel machines.

\section{Problem formulation}

%An instance of the considered scheduling problem $P || \sum C_i$ is given by the integer $n$, denoting the number of jobs, the integer $m$, denoting the number of identical machines (processors), and the set of processing times of jobs: integers $p_i$ for $i=1,\ldots,n$. 
In the scheduling problem $P || \sum C_i$ we are given $m$ identical parallel machines (processors) for processing $n$ jobs, where each job $i$ has an integer processing time $p_i$, $i=1,\ldots,n$.
If not stated otherwise, we assume that all $p_i \geq 0$. Each job has to be assigned to exactly one machine. Let $\pi_j$ denote a vector of integers, where $\pi_j(k)$ is the index of the job scheduled on the $j$th machine as the $k$th to the last (jobs on each machine are scheduled starting from time zero and without idle times). Let $n_j$ denote the number of elements in $\pi_j$, i.e., the number of jobs assigned to machine $j$. 
%The {\it completion time} of $j$th job scheduled as $k$th to the last is: $C_{j,k} = \sum_{i=k}^{n_j} p_{\pi_j(i)}$ ($C_{j,k} = 0$ if there is no such job). 
The {\it completion time} of the job scheduled as $k$th to the last on machine $j$ is $C_{j,k} = \sum_{i=k}^{n_j} p_{\pi_j(i)}$ ($C_{j,k} = 0$ if there is no such job). 
The objective is to minimize the total sum of completion times (also called the {\it total flow time}), expressed as:
\begin{equation}\label{psumci-obj}
	F(\pi) = \sum_{j=1}^m \sum_{k=1}^{n_j} C_{j,k} = \sum_{j=1}^m \sum_{k=1}^{n_j} k p_{\pi_j(k)},
\end{equation}
where $\pi = [\pi_1, \ldots, \pi_m]$ is called a {\it schedule}. We will sometimes refer to this problem formulation as the {\it deterministic version} of the scheduling problem.

The definition of the minmax regret version of this problem with interval uncertainty, denoted as \textsc{interval minmax regret} $P || \sum C_i$, differs as follows. Instead of having exact processing times $p_i$, we are now given only intervals~$[p^-_i, p^+_i]$, $i=1,\ldots,n$, to which the actual processing times belong. We denote by $S = [p_1^S, \ldots, p_n^S]$ any vector that satisfies $p^-_i \leq p_i^S \leq p^+_i$ for all $i=1,\ldots,n$. Such a vector will be called a {\it scenario}. For any schedule $\pi$ and scenario $S$ we define the value of {\it regret} as $Z(\pi, S) := F(\pi, S) - F^*(S)$, where $F(\pi, S)$ is the objective function \eqref{psumci-obj} from the deterministic version of the problem $P || \sum C_i$ with input data $S$, and $F^*(S)$ is the value of an optimal solution of this problem. The objective of \textsc{interval minmax regret} $P || \sum C_i$ is to minimize over schedules the maximum of regret over scenarios:
\begin{equation}\label{rob-obj}
	Z^* = \min_{\pi} \max_S \left( F(\pi, S) - F^*(S) \right).
\end{equation}
The above minmax regret formulation is a {\it robust optimization} formulation of the scheduling problem. A schedule minimizing the maximum regret will be called {\it robust optimal}.

\section{Computation of maximum regret}

The deterministic version of the considered problem is solvable in polynomial time (see \cite{pinedo2012scheduling}, Theorem 5.3.1). %An optimal schedule can be obtained by sorting all $n$ jobs in the order of nondecreasing processing times, and then by assigning the first $m$ jobs (assuming $n \geq m$) to the first positions of the $m$ parallel machines (the first job to the first machine, the second job to the second machine, etc.), then the next $m$ jobs (assuming $n \geq 2m$) to the second positions of the machines in the same order, and so on, until all the jobs are assigned. The last batch of jobs assigned to the last positions of the machines may count less than $m$ jobs.
An optimal schedule can be obtained by first sorting all $n$ jobs in order of non-decreasing processing times, and then we assign the first unassigned job in the list to the least loaded machine, i.e., to the machine with the smallest current makespan. 
Repeating this procedure until all jobs are assigned gives the desired schedule.

We show that for a fixed schedule $\pi$ it is possible to compute the value of maximum regret $Z(\pi) = \max_S Z(\pi, S)$ in polynomial time. The method is analogous to the one presented in \cite{conde2014mip} for parallel unrelated machines, with the main difference that in the identical machines case the input data contains a single interval $[p_i^-, p_i^+]$ instead of $m$ intervals given in the case of unrelated machines. Thus we omit the simple proof of correctness of Formulas \eqref{max-regret}--\eqref{wc-scenario}.

Let us encode a feasible solution $\pi$ of the considered problem in terms of binary variables ${\bf x}$ as follows: let $x_{ijk} = 1$ iff the $i$th job is processed on the $j$th machine as the $k$th to the last.

For any feasible schedule ${\bf x}$ the maximum regret can be computed as:
\begin{equation}\label{max-regret}
	\sum_{j=1}^m \sum_{i=1}^n \sum_{k=1}^n k p_i^+ x_{ijk} - \min_{\bf y} \sum_{j=1}^m \sum_{i=1}^n \sum_{k=1}^n c_{ijk}({\bf x}) y_{ijk},
\end{equation}
where
\begin{equation}\label{cijk}
	c_{ijk}({\bf x}) = k p_i^- + (p_i^+ - p_i^-) \sum_{l=1}^m \sum_{r=1}^n \min\{ r, k \} x_{ilr}.
\end{equation}
The minimization in \eqref{max-regret} is equivalent to the minimum assignment problem and thus can be solved in polynomial time, using e.g. the Hungarian method \cite{papadimitriou1998combinatorial}. Variable ${\bf y}$ is an $n \times (mn)$ permutation matrix.

Let $x_{ij} = k$ iff $x_{ijk} = 1$, and $y_{ij} = k$ iff $y_{ijk} = 1$. For a fixed ${\bf x}$, given a solution ${\bf y}$ of the minimization in \eqref{max-regret}, the worst-case scenario can be obtained as:
\begin{equation}\label{wc-scenario}
	p_i = \left\{ \begin{array}{ll} p_i^+ & \text{if } \sum_{j=1}^m (x_{ij} - y_{ij}) \geq 0, \\ p_i^- & \text{otherwise.} \end{array} \right. 
\end{equation}

\section{Properties of optimal solutions}

Denote by $Z({\pi})$ the solution value of the \textsc{interval minmax regret} $P||\sum C_i$ problem for a schedule $\pi = [ \pi_1, \ldots, \pi_m]$. 
The maximum regret can be written as:
$$
	Z(\pi) = \max_S \left( F(\pi, S) - F^*(S) \right) = \max_S \sum_{j=1}^m \left( F_j(\pi_j, S) - F_j^*(S) \right) = \sum_{j=1}^m Z_j(\pi_j).
$$
Here $F_j(\pi_j, S) = \sum_{k=1}^{n_j}k p_{\pi_j(k)}^S$ is the sum of completion times of the jobs on machine~$j$ under scenario $S$, $n_j = |\pi_j|$ is the number of jobs assigned to machine~$j$, $F_j^*(S)$ is the sum of completion times of the jobs on machine~$j$ in an optimal solution under scenario $S$, and finally, $Z_j(\pi_j)$ is the difference between $F_j(\pi_j, S)$ and $F_j^*(S)$ for a scenario $S$ that maximizes the total regret. Let $\pi^*$ denote an optimal robust solution, i.e., a schedule that minimizes $Z$. 

Consider a worst-case scenario for any schedule $\pi$ (see Eq.~\eqref{wc-scenario}). This scenario defines an instance of the deterministic version of the problem. An optimal solution of this deterministic problem is called a {\it worst-case alternative} for $\pi$.

From now on, we consider only instances satisfying the following assumption.

\begin{assumption}\label{a1}
	The number of jobs $n$ is divisible by the number of machines~$m$, i.e., there exists an integer $n_0 > 0$ such that $n = m \cdot n_0$.
\end{assumption}

In particular, if $m | n$, then any schedule has a worst-case alternative with an equal number of jobs on each machine. Moreover, the following is true.

\begin{lemma}\label{lem:1}

If $m | n$, then in an optimal robust schedule every machine is assigned the same number of jobs.

\end{lemma}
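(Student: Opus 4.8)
The plan is to argue by an exchange/balancing argument: start from any robust-optimal schedule $\pi^*$ and show that if some machine carries more jobs than another, we can move a job to strictly decrease (or at least not increase, then iterate to a contradiction) the maximum regret. First I would record the decomposition $Z(\pi) = \sum_{j=1}^m Z_j(\pi_j)$ established just above the lemma, so that the quantity to be controlled splits machine-by-machine. The key observation I would isolate is a \emph{monotonicity/superadditivity} property of the single-machine regret function $Z_j(\pi_j)$ in the number of jobs: appending one more job to a machine with $q$ jobs increases that machine's regret by more than appending it to a machine with $q' < q$ jobs would, because the positional multipliers $k$ in $F_j(\pi_j,S)=\sum_k k\,p_{\pi_j(k)}^S$ grow with the load, and the same structure governs the worst-case alternative $F_j^*(S)$. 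Concretely, I expect to show: if machine $a$ has $n_a$ jobs and machine $b$ has $n_b$ jobs with $n_a \ge n_b + 2$, then moving the longest-interval (or suitably chosen) job from $a$ to $b$ does not increase $\sum_j Z_j$.

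The steps, in order, would be: (1) fix a robust-optimal $\pi^*$ and, using Assumption~\ref{a1} and the remark that every schedule has a worst-case alternative with $n_0$ jobs per machine, note that $F^*(S)$ for the relevant scenarios is evaluated on balanced schedules; (2) suppose for contradiction the loads are not all equal, so there exist machines $a,b$ with $n_a > n_b$, hence $n_a \ge n_b + 2$ is \emph{not} automatic — actually $n_a \ge n_b+1$ — so I would instead pick $a$ with maximum load and $b$ with minimum load, giving $n_a \ge n_0+1 > n_0-1 \ge n_b$; (3) analyze the change $\Delta = Z_a(\pi_a \setminus \{i\}) + Z_b(\pi_b \cup \{i\}) - Z_a(\pi_a) - Z_b(\pi_b)$ for a job $i$ moved from $a$ to $b$, placed in its optimal slot on $b$; (4) bound $\Delta \le 0$ by comparing the two sides term-by-term, using that each $C_{j,k}$ and each multiplier $k$ is weakly smaller on the less-loaded machine, and that the worst-case alternative on the more-loaded side can only lose by having fewer jobs; (5) iterate the exchange finitely many times to reach a balanced schedule of no larger maximum regret, which must therefore also be robust-optimal, proving the claim (and in fact that \emph{some} optimal schedule is balanced; to get that \emph{every} one is, I would additionally argue the inequality in (4) is strict whenever loads are unequal, e.g. because the extra positional factor contributes a strictly positive term unless $p_i^+ = p_i^- = 0$, a degenerate case handled separately or excluded).

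The main obstacle I anticipate is step (4): the worst-case scenario $S$ (Eq.~\eqref{wc-scenario}) and the worst-case alternative $F^*(S)$ both change when we move job $i$, so $Z_a$ and $Z_b$ are not simple closed forms but maxima over scenarios, and one cannot naively compare them pointwise. The clean way around this is to use the standard sub-/super-additivity trick for maxima: to upper-bound $Z_a(\pi_a\setminus\{i\}) + Z_b(\pi_b\cup\{i\})$ it suffices to exhibit \emph{one} scenario for each machine; to lower-bound $Z_a(\pi_a) + Z_b(\pi_b)$ we get to use the \emph{worst} scenarios, which only helps. So I would pick, for the new configuration, the scenario inherited from the old worst-case scenario and show the resulting regret does not exceed the old one — reducing everything to a deterministic inequality about how $\sum_k k\,p_{\pi_j(k)}$ and the SPT-optimal value on each machine respond to moving one job from a heavier to a lighter machine, which is exactly the kind of bound underlying optimality of load-balancing for $P\|\sum C_i$ and can be verified by a short direct computation.
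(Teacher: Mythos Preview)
Your overall exchange strategy matches the paper's, but there is a genuine direction error in how you handle the scenario. You propose to evaluate the \emph{new} (more balanced) schedule at the worst-case scenario of the \emph{old} schedule and then show the resulting regret does not exceed the old one. That inequality is true but proves nothing: since $Z(\pi)=\max_S\bigl(F(\pi,S)-F^*(S)\bigr)$, evaluating the new schedule at any particular scenario yields only a \emph{lower} bound on its maximum regret, so you would be comparing a lower bound for the new schedule with the exact value for the old. Your earlier sentence ``to upper-bound \dots\ it suffices to exhibit one scenario'' is the same error restated. The correct move---and this is exactly what the paper does---is the reverse: let $S^{\pi_1}$ be the worst-case scenario for the \emph{new} schedule $\pi_1$; the deterministic balancing fact gives $F(\pi_1,S^{\pi_1}) < F(\pi_2,S^{\pi_1})$; hence
\[
Z(\pi_1) = F(\pi_1,S^{\pi_1}) - F^*(S^{\pi_1}) < F(\pi_2,S^{\pi_1}) - F^*(S^{\pi_1}) \le Z(\pi_2).
\]

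The machine-wise decomposition $Z(\pi)=\sum_j Z_j(\pi_j)$ also complicates your plan unnecessarily: those $Z_j$ are all defined with respect to a \emph{single global} worst-case scenario for $\pi$, so $Z_a$ and $Z_b$ are not independent maxima and are not well defined in isolation after you move a job. The paper sidesteps this entirely by never splitting the regret: it works with the full $F(\pi,S)$ at a fixed scenario $S$ and uses only the classical deterministic fact that, for $P\|\sum C_i$ under any fixed processing times, repeatedly shifting the first job from the longest machine to the shortest strictly decreases the total completion time when the loads are unequal. No choice of ``longest-interval job'', ``optimal slot on $b$'', or ``monotonicity/superadditivity of $Z_j$'' is needed; once the scenario is fixed the inequality is purely combinatorial, and the regret chain above finishes the argument in one line.
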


\begin{proof}

Let $\pi_2$ be any schedule with different number of jobs on at least two machines. Under a fixed scenario $S$, there exists a schedule $\pi_1$ with an equal number of jobs on each machine, such that $F(\pi_1, S) < F(\pi_2, S)$. This follows from the fact that we can construct $\pi_1$ from $\pi_2$ by performing a sequence of the following job displacements: from the machine with the longest schedule remove the job from the first position and insert it at the first position on the machine with the least number of jobs (the multipliers $k$ in \eqref{psumci-obj} of the remaining jobs are unchanged, but the multiplier of the moved job may decrease; the last such displacement must be performed between two machines that differ in the number of jobs by 2, thus the multiplier of that job decreases by 1, and the overall cost of the schedule decreases). 

Let us denote by $S^{\pi}$ the worst-case scenario for $\pi$. Then we get:
\begin{align}
	Z(\pi_1) = F(\pi_1, S^{\pi_1}) - F^*(S^{\pi_1}) < F(\pi_2, S^{\pi_1}) - F^*(S^{\pi_1}) \nonumber
\\
	\leq F(\pi_2, S^{\pi_2}) - F^*(S^{\pi_2}) = Z(\pi_2). \nonumber
\end{align}

The last inequality follows from the fact that $S^{\pi_1}$ is not necessarily the worst-case scenario for $\pi_2$, thus by definition the value of regret $Z(\pi_2, S^{\pi_1})$ is no greater than that of the maximum regret, computed for the worst-case scenario $S^{\pi_2}$. Lemma~\ref{lem:1} follows from the above reasoning, which indicates that for any schedule with a different number of jobs on machines, there exists a strictly better solution with an equal number of jobs on each machine, so only such a solution could be optimal.

\end{proof}

Justified by Lemma~\ref{lem:1}, we restrict in the rest of the paper to schedules with an equal number of jobs on all machines, i.e., we consider $m \times n_0$ matrices, with each row representing a sequence of jobs on a machine. In order to prove the main result, we need the following lemmas.

\begin{lemma}\label{lem:exchanges}
	Given a schedule $\pi$, where $\pi$ is an $m \times n_0$ matrix, consider a schedule $\pi'$ obtained by switching a pair of elements in any column of $\pi$. Both schedules have the same maximum regret.
\end{lemma}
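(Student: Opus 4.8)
The plan is to prove the stronger, scenario-by-scenario statement that $F(\pi, S) = F(\pi', S)$ for \emph{every} scenario $S$, and then read off equality of the maximum regrets. Recall that $\pi$ is an $m \times n_0$ matrix whose row $j$ is the sequence $(\pi_j(1), \ldots, \pi_j(n_0))$ of jobs on machine $j$, so an entry lying in column $k$ is a job occupying the ``$k$th to the last'' position on its machine and therefore enters the objective \eqref{psumci-obj} with coefficient $k$. Suppose $\pi'$ is obtained from $\pi$ by exchanging the two entries in some column $k$ that lie in rows $j_1$ and $j_2$; write $a = \pi_{j_1}(k)$ and $b = \pi_{j_2}(k)$ for the swapped jobs. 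Note first that $\pi'$ is again a valid schedule with $n_0$ jobs on every machine.

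Next I would compare the two objective values directly. Using $F(\pi, S) = \sum_{j=1}^m \sum_{k=1}^{n_0} k\, p^S_{\pi_j(k)}$, every summand other than the ones indexed by $(j_1,k)$ and $(j_2,k)$ is literally the same for $\pi$ and $\pi'$, while those two summands contribute $k\,p^S_a + k\,p^S_b$ in both schedules. Equivalently, regrouping as $F(\pi,S) = \sum_{k=1}^{n_0} k \sum_{j=1}^m p^S_{\pi_j(k)}$ shows that the objective depends only on which \emph{multiset} of jobs occupies each column, and a within-column swap does not change that multiset. Hence $F(\pi, S) = F(\pi', S)$ for all $S$.

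Finally, since the optimal deterministic value $F^*(S)$ depends on $S$ only and not on the schedule, we obtain $Z(\pi, S) = F(\pi, S) - F^*(S) = F(\pi', S) - F^*(S) = Z(\pi', S)$ for every scenario $S$, and taking the maximum over $S$ gives $Z(\pi) = \max_S Z(\pi, S) = \max_S Z(\pi', S) = Z(\pi')$. I do not expect a genuine obstacle in this argument; the one point that needs to be stated carefully is that the column index in the matrix representation is exactly the positional multiplier $k$ appearing in \eqref{psumci-obj}, so that moving a job within its column never changes its coefficient — given that, the conclusion follows immediately from the linearity of $F$ in the processing times.
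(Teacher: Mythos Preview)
Your proof is correct and rests on the same key observation as the paper: the contribution of each job to the objective depends only on its column index (its positional multiplier $k$), not on which machine it sits on, so within-column swaps leave everything invariant. The paper phrases this by rewriting the maximum regret as $Z(\pi)=\sum_i (k^{\pi(i)}-k^{\sigma(i)})\,p_i^{S^\pi}$ and noting the absence of machine indices, whereas you argue the slightly stronger and more direct statement $F(\pi,S)=F(\pi',S)$ for every scenario $S$ and then take the maximum; your route avoids introducing the worst-case alternative $\sigma$ altogether, but the underlying idea is identical.
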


\begin{proof}
	Let $k^{\pi(i)}$ be the position of the $i$th job on its machine, counting from the last position on the machine (i.e., if $i$ happens to be scheduled on machine $j$, then the last position is $n_j$, and $1 \leq k^{\pi(i)} \leq n_j$). Since every job has to be assigned to exactly one position on one machine, then clearly the cost of the schedule $\pi$ under the scenario $S$ is $F(\pi, S) = \sum_{i=1}^n k^{\pi(i)} p_i^S$. Let $\sigma$ be the worst-case alternative for $\pi$, and let $k^{\sigma(i)}$ be the position to the last of the $i$th job on its machine in the worst-case alternative. Let $S^{\pi}$ be the worst-case scenario. Then the maximum regret can be expressed as:
\begin{equation}\label{mr-nomach}
		Z(\pi) = F(\pi, S^{\pi}) - F(\sigma, S^{\pi}) = \sum_{i=1}^n \left( k^{\pi(i)} - k^{\sigma(i)} \right) p_i^{S^{\pi}}. 
\end{equation}
	The worst-case scenario $S^{\pi}$ can be found by taking $p_i^{S^{\pi}} = p_i^+$ if $k^{\pi(i)} - k^{\sigma(i)} \geq 0$, and $p_i^{S^{\pi}} = p_i^-$ otherwise. Values $k^{\pi(i)}$ and $k^{\sigma(i)}$ can be easily determined given $\pi$ and $\sigma$.
	
	Observe that in the Eq.~\eqref{mr-nomach} there are no machine indices associated with jobs. The value of the maximum regret can be computed knowing only positions of jobs on machines, while the assignment of jobs to machines is irrelevant. Thus we may arbitrarily permute jobs within columns of $\pi$ obtaining schedules with the same maximum regret and the same worst-case alternative.
	
\end{proof}

\begin{lemma}\label{lem:optimal-form}

There exists an optimal robust schedule of \textsc{interval minmax regret} $P||\sum C_i$ such that for any machine $j=1,\ldots,m$ the schedule on machine~$j$ is the same as the optimal robust schedule in the \textsc{interval minmax regret} $1||\sum C_i$ problem.

\end{lemma}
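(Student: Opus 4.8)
By Lemma~\ref{lem:1} it suffices to look for an optimal robust schedule among $m\times n_0$ matrices. For a job set $J$ with $|J|=n_0$, a sequence $\rho$ of $J$, and a scenario $S$, let $Z_1(\rho,J)$ be the maximum regret of $\rho$ in the single-machine instance on $J$, let $Z_1^*(J)=\min_\rho Z_1(\rho,J)$ be the optimal robust value of \textsc{interval minmax regret}~$1||\sum C_i$ on $J$, and let $F^*(J,S)$ be the cost of an optimal deterministic single-machine schedule of $J$ under $S$. The goal is to prove that $Z^*=\min_{(J_1,\ldots,J_m)}\sum_{j=1}^m Z_1^*(J_j)$, the minimum ranging over partitions of the jobs into $m$ sets of $n_0$ jobs, and that this value is attained by the schedule that distributes the parts of an optimal partition among the machines and runs a robust-optimal single-machine sequence on each of them; the lemma then follows immediately.

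The inequality $Z^*\ge\min_{(J_j)}\sum_j Z_1^*(J_j)$ is easy. For \emph{any} $m\times n_0$ schedule $\pi$ whose rows form sets $J_1,\ldots,J_m$, and \emph{any} scenario $S$, restricting the optimal deterministic schedule under $S$ to the job-to-machine assignment of $\pi$ can only raise its cost, so $F^*(S)\le\sum_j F^*(J_j,S)$, whence $F(\pi,S)-F^*(S)\ge\sum_j\bigl(F_j(\pi_j,S)-F^*(J_j,S)\bigr)$; since the sets $J_j$ are pairwise disjoint the maximum over $S$ of the right-hand side decomposes termwise, giving $Z(\pi)\ge\sum_j Z_1(\pi_j,J_j)\ge\sum_j Z_1^*(J_j)\ge\min_{(J_j)}\sum_j Z_1^*(J_j)$. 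Minimizing over $\pi$ yields the claim.

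For the reverse inequality, fix a partition $(J_1,\ldots,J_m)$ attaining $\min_{(J_j)}\sum_j Z_1^*(J_j)$, let $\tau_j$ be a robust-optimal single-machine sequence for $J_j$, and let $\hat\pi$ be the schedule with rows $\tau_1,\ldots,\tau_m$. Everything reduces to the bound $Z(\hat\pi)\le\sum_j Z_1^*(J_j)$, for then $Z^*\le Z(\hat\pi)\le\sum_j Z_1^*(J_j)\le Z^*$, forcing equality, and $\hat\pi$ is the desired schedule. I would establish this bound by showing that, for the partition $(J_j)$, there is a worst-case scenario $S$ of $\hat\pi$ whose optimal deterministic competitor can be taken with the same partition, meaning $F^*(S)=\sum_j F^*(J_j,S)$; granting this, $Z(\hat\pi)=F(\hat\pi,S)-F^*(S)=\sum_j\bigl(F_j(\tau_j,S)-F^*(J_j,S)\bigr)\le\sum_j Z_1(\tau_j,J_j)=\sum_j Z_1^*(J_j)$. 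For the existence of such an $S$ I would use \eqref{mr-nomach} (the worst-case regret depends only on the positions occupied by the jobs in $\hat\pi$ and in the alternative), the fact noted after Assumption~\ref{a1} that the worst-case alternative may be chosen balanced, and Lemma~\ref{lem:exchanges} to normalize the partition within columns, combined with the minimality of $(J_j)$.

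The main obstacle is exactly this last step: showing that a worst-case scenario of $\hat\pi$ admits an optimal deterministic competitor aligned with the partition $(J_j)$. The difficulty is the coupling through the global optimum $F^*(S)$: for a badly chosen partition — for instance one grouping all the "wide" intervals on one machine and all the "narrow" ones on another — the regret can blow up and the decomposition fails, so one must genuinely use that $(J_j)$ minimizes $\sum_j Z_1^*(\cdot)$. Turning this into a rigorous exchange argument — reconciling the robust-value optimality of the partition with the combinatorial structure of optimal $P||\sum C_i$ schedules under a fixed scenario, and respecting the interval bounds throughout — is the delicate point on which the whole proof rests.
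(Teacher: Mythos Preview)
Your lower bound $Z^*\ge\min_{(J_j)}\sum_j Z_1^*(J_j)$ is fine, and your reduction of the upper bound to the alignment claim (that $\hat\pi$ admits a worst-case scenario whose optimal deterministic competitor uses the same partition $(J_j)$) is the right diagnosis of where the work lies. But you have not proved that claim, and your suggested line of attack --- exploiting the \emph{minimality} of the partition $(J_j)$ --- is not the way the paper closes the gap, and I do not see how to make it work: the optimal deterministic competitor under a given scenario is determined purely by the SPT rule applied to that scenario, and there is no obvious mechanism by which robust-optimality of the partition forces the SPT ordering to respect it.

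The paper sidesteps this difficulty entirely. It does \emph{not} fix a minimizing partition in advance. Instead it starts from an arbitrary optimal robust schedule $\pi^*$ together with its worst-case alternative $\sigma$, and shows that one can always permute jobs \emph{within columns} of $\pi^*$ and of $\sigma$ simultaneously (Lemma~\ref{lem:exchanges} guarantees this preserves the regret and keeps $\sigma$ optimal) so that the resulting $\pi'$ and $\sigma'$ put the same set of jobs on each machine. The existence of such a simultaneous column permutation is the crux, and it is proved by a coloring argument: form the $n_0\times n_0$ incidence matrix $M$ whose $(x,y)$ entry records the jobs sitting in column $x$ of $\pi^*$ and column $y$ of $\sigma$; every row and every column of $M$ contains exactly $m$ jobs, and one needs an $m$-coloring in which each row and each column sees all $m$ colors. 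Induction on $m$ together with Hall's marriage theorem (applied to the bipartite row/column graph of $M$) yields this coloring. Once $\pi'$ and $\sigma'$ share the same partition, the regret of $\pi'$ decomposes exactly as $\sum_j Z_1(\pi'_j,J'_j)$, and replacing each row by its single-machine robust optimum can only help. So the missing idea in your plan is the Hall-type alignment of $\pi$ with its worst-case alternative, which works for \emph{every} balanced schedule and makes the appeal to minimality of the partition unnecessary.
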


\begin{proof}
According to Lemma~\ref{lem:exchanges}, given any schedule $\pi$, it is possible to construct an equivalent schedule $\pi'$ by permuting jobs within a column of $\pi$ arbitrarily. 
The worst-case scenario for $\pi'$ is identical to the worst-case scenario of $\pi$ due to Lemma~\ref{lem:exchanges} and Eq.~\eqref{wc-scenario}. 
The worst-case alternative for $\pi'$ can be obtained by sorting all $m \cdot n_0$ processing times of the worst-case scenario for $\pi'$ in an ascending order, then grouping them into $n_0$ consecutive sets of $m$ numbers, so that jobs from the same group occupy the $i$th position on a machine, $i= 1, \ldots, n_0$ (again, jobs within a group may be assigned to machines arbitrarily without changing the value of the solution). 
By $\sigma$ we denote the worst-case alternative for $\pi$.

Given a schedule $\pi$ and its worst-case alternative $\sigma$, we construct a schedule $\pi'$ and its worst-case alternative $\sigma'$, such that:
\begin{enumerate}[(i)]
	\item solution $\pi'$ has the same maximum regret as $\pi$ and
	\item for each machine $j = 1,\ldots,m$, each job processed in $\pi'$ on the $j$th machine is also processed on the $j$th machine in its worst-case alternative $\sigma'$.
\end{enumerate}
See Fig.~\ref{fig:1} for an example. If we can prove that such a construction always exists, then we can argue that we may restrict to the single machine problem.

To accomplish this, we construct an $n_0\times n_0$ matrix $M$ with the following properties. 
Let $x_i$ and $y_i$ be the position (column) of job $i$ in $\pi$ and $\sigma$, respectively.
Then job $i$ is inserted into $M$ at $(x_i,y_i)$.
Note that several jobs can be at the same position in $M$, but in every row and in every column of $M$ we have exactly $m$ jobs.
Empty positions in $M$ are marked with an empty sign.
Matrix~$M$ tells us that if two jobs $i_1$ and $i_2$ are in the same row or column of $M$, then they are in conflict, i.e., $i_1$ and $i_2$ cannot run on the same machine in $\pi'$ and $\sigma'$.
The question is now if we can color the jobs in $M$ with $m$ colors such that the jobs of each row and of each column have different colors. 
If two jobs have the same color, then they run on the same machine in $\pi'$ and $\sigma'$, the position remains the same as in $\pi$ and $\sigma$.

We give a positive answer and show it by induction on $m$.
Clearly, if we have only one machine, then we need only one color (machine).
Suppose the statement holds for $m-1$ machines. 
Assume for the moment that we can choose a set of jobs from $M$ so that we have exactly one job in each row and each column.
Removing the chosen jobs from $M$ gives the $m-1$ case, which by induction can be $(m-1)$-colored. 
The chosen jobs can be colored with a new color, yielding an $m$-coloring.

Now we need to show that it is possible to select a set of jobs so that we have exactly one job in each row and each column of $M$. 
This can be shown by Hall's marriage theorem~\cite{hall1935}. 
We associate with our matrix $M$ the following bipartite graph. 
We create a vertex for each row and each column of $M$ and connect the vertex for row $i$ with the vertex for column $j$ if the matrix $M$ has at least one element at position $(i,j)$. 
If this graph has a perfect matching, then we can select a set of jobs so that we have exactly one job in each row and each column. 
Assume, by contradiction, that this graph has no perfect matching. 
Then, by Hall's marriage theorem, there is a set $R$ of row vertices such that the neighbouring set $N(R)$ of column vertices is strictly smaller, i.e., $|N(R)| < |R|$. 
Consider the total number $\tau$ of jobs being in the rows and columns which are represented by $R$ and $N(R)$, i.e., we consider the submatrix formed by rows $R$ and columns $N(R)$. 
In every row from $R$ there are $m$ jobs and, by the definition of the graph, they are all in the columns associated with $N(R)$. 
That is, $\tau = m |R|$. 
In every column of $N(R)$ there are at most $m$ jobs when restricting to rows in $R$. 
Therefore, we have that $\tau \leq m |N(R)|$, which contradicts our assumption that $|N(R)| < |R|$, since $m>1$.  

Let $\pi^*$ be an optimal robust solution. 
We have shown that it is possible to obtain a schedule $\pi'$ which has the same total processing time as $\pi^*$ and the property that all jobs from the $j$th machine in $\pi'$ also appear on the $j$th machine in the worst-case alternative for $\pi'$. 
Thus $\pi'$ is also optimal robust and we may consider independently each schedule $\pi'_j$ on each $j$th machine. 
For any such machine the minimum of $Z_j(\pi_j) \geq 0$ is obtained if $\pi_j$ is the optimal robust solution of the single machine scheduling problem \textsc{interval minmax regret} $1|| \sum C_i$.
\end{proof}

\begin{figure}[!ht]
	\caption{Example of a pair of schedule $\pi$ and its worst-case alternative $\sigma$, along with a corresponding pair $\pi'$, $\sigma'$ satisfying the condition of the Lemma~\ref{lem:optimal-form}; $m=4$ machines, $n_0=4$ jobs on each machine (numbers denote the indices of jobs).}\label{fig:1}
	\centering
$$
	\begin{array}{cc}
		\pi = \left[
		\begin{array}{cccc}
			1 & 5 & 9 & 13\\
			2 & 6 & 10 & 14\\
			3 & 7 & 11 & 15\\
			4 & 8 & 12 & 16\\
		\end{array}
		\right]
		&
		\sigma = \left[
		\begin{array}{cccc}
			1 & 2 & 3 & 4\\
			5 & 6 & 7 & 8\\
			9 & 10 & 11 & 12\\
			13 & 14 & 15 & 16\\
		\end{array} 
		\right]

		\\
		\\
		
		\pi' = \left[
		\begin{array}{cccc}
			1 & 6 & 11 & 16\\
			2 & 5 & 12 & 15\\
			3 & 8 & 9 & 14\\
			4 & 7 & 10 & 13\\
		\end{array}
		\right]
		&
		\sigma' = \left[
		\begin{array}{cccc}
			1 & 6 & 11 & 16\\
			5 & 2 & 15 & 12\\
			9 & 14 & 3 & 8\\
			13 & 10 & 7 & 4\\
		\end{array} 
		\right]
		
	\end{array}
$$
\end{figure}

Let us take a closer look at the \textsc{interval minmax regret} $1||\sum C_i$ problem. Observe that its formulation remains valid when the bounds of intervals $p_i^-$ and $p_i^+$ are arbitrary (possibly negative) integers, and that adding the same constant to all bounds of intervals does not change the value of the maximum regret \cite{lebedev2006complexity}.

Consider instances with equal midpoints $c$ of all job processing time intervals, that is $[p_i^-, p_i^+] = [c - p_i, c + p_i]$, for all $i=1,\ldots,n$. Due to \cite{lebedev2006complexity} we know that an optimal solution for a single machine can be obtained as:

(a) for an even number of jobs $n=2k$ on a machine, $Z(\pi^*) = k\sum_i p_i$, 

(b) for an odd number of jobs $n=2k+1$ on a machine we have:
\begin{equation}\label{r2}
	Z(\pi^*) = k\sum_{i=1}^{n} p_i + \max\{ P_1, P_2 \},
\end{equation}    
where $(P_1, P_2)$ is a solution of the optimization version of the \textsc{balanced-partition} problem, i.e., $P_1$ and $P_2$ are the sums of two disjoint $k$-element subsets of the set of $2k$ smallest values $p_i$, and the value $|P_1 - P_2|$ is minimal among all such 2-partitions. The job with the widest interval is always inserted in the middle of the permutation and does not appear in $P_1$ or $P_2$. The remaining jobs are scheduled in such a way that the wider the interval, the closer the job is to the middle of the permutation (in \cite{lebedev2006complexity} authors call such schedules {\it uniform}).

\section{Problem complexity}

The main result in this paper, stated as Theorem \ref{thm:main}, is based on the reduction from a variant of a set partitioning problem that is strongly NP-complete. %due to Theorem \ref{thm:4pp}.

Consider an instance of the 3-\textsc{partition} problem: given is a set of $3m$ positive integers $a_i$, $i=1,\ldots,3m$, and an integer $B$, such that $\sum_{i=1}^{3m} a_i = mB$, and for all $i$, $B/4 < a_i < B/2$. The question is: can we partition the given set of integers into $m$ disjoint triplets of integers, such that each triplet sums up to exactly $B$?

We define the 4-\textsc{partition-into-pairs} problem (the 4-\textsc{pp} problem for short) as follows: given is a set of $4m$ positive integers $a_i$, $i=1, \ldots, 4m$. The question is whether it is possible to partition the given set of integers into $m$ disjoint quadruplets of integers $A_1, \ldots, A_m$, such that there exists a bijective function $f : \{ 1, \ldots, m \} \rightarrow \{ 1, \ldots, m\}$, such that:
$$
	\forall_{i \in \{ 1, \ldots, m \} } \; s(A_i) = s(A_{f(i)}) \; \textrm{ and } \; f(i) \neq i \; \textrm{ and } \; f(f(i)) = i,
$$
where $s(A)$ is the sum of elements in $A$. In other words, we want to partition the set of integers into $m$ 4-sets in such a way that all the 4-sets can be arranged in distinct pairs of equal sums.

\begin{theorem}\label{thm:4pp}
	The 4-\textsc{pp} problem is strongly NP-complete.
\end{theorem}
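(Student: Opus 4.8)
The plan is to prove membership in NP and then reduce from 3-\textsc{partition}. Membership is immediate: a partition into quadruplets together with the involution $f$ is a polynomial-size certificate verifiable in polynomial time. For hardness I would reduce from 3-\textsc{partition}, producing 4-\textsc{pp} instances whose numbers are bounded by a polynomial in $m$ and $B$, which yields strong NP-completeness.

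From a 3-\textsc{partition} instance $a_1,\dots,a_{3m}$ with $\sum_i a_i = mB$ and $B/4 < a_i < B/2$, I would build a 4-\textsc{pp} instance with $8m$ numbers, hence $2m$ quadruplets (an even number, so the parity required for $f$ is automatic). It consists of: the $3m$ numbers $a_i$; $m$ ``markers'' $\rho_\ell = R + D\,s_\ell$, where $R$ is a large integer with $R \equiv -B \pmod{4}$, $D$ is a large multiple of $4$, and $s_1,\dots,s_m$ is a Sidon-type set of polynomially bounded positive integers (all sums of a bounded number of the $s_\ell$, with repetitions, being distinct); and $4m$ ``fillers'', namely four copies of $\gamma_\ell := (B+\rho_\ell)/4$ for each $\ell$. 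The intended ``yes'' certificate is obtained from a 3-\textsc{partition} solution $T_1,\dots,T_m$ by taking the $2m$ quadruplets $Q_\ell := T_\ell \cup \{\rho_\ell\}$ and $\bar Q_\ell := \{\gamma_\ell,\gamma_\ell,\gamma_\ell,\gamma_\ell\}$; since $s(Q_\ell) = B + \rho_\ell = 4\gamma_\ell = s(\bar Q_\ell)$, the involution pairing $Q_\ell$ with $\bar Q_\ell$ works. The decisive feature is that $\bar Q_\ell$ has a \emph{fixed} sum, so anything matched with it is forced to sum to exactly $B + \rho_\ell$; this converts the (by itself too weak) equal-sum condition into the equation $s(T_\ell) = B$.

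For the converse I would show that any ``yes'' certificate must essentially have this shape. Since $R$ dominates all other magnitudes, two matched quadruplets have the same value of $4k + l$, where $k$ and $l$ are their numbers of markers and fillers; among the patterns $(k,l,j)$ with $k+l+j=4$, the only coincidence in the value of $4k+l$ is between the all-filler pattern and the ``one marker, three $a_i$'s'' pattern. So every matched pair that is not of this ambiguous kind consists of two quadruplets of one common type, and --- because each marker appears exactly once and the $s_\ell$ are Sidon with $D$ large --- one checks type by type that two distinct such quadruplets cannot have equal sums (their marker parts differ by a multiple of $D$ that the small $a_i$- and filler-parts cannot absorb). Hence every quadruplet is all-filler or of the form ``one marker, three $a_i$'s''; a counting argument with the totals ($m$ markers, $4m$ fillers, $3m$ of the $a_i$, $2m$ quadruplets) forces exactly $m$ of each kind, a separation argument excludes marker--marker matched pairs, and then counting excludes all-filler--all-filler pairs. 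Thus $f$ matches the $m$ ``one marker'' quadruplets bijectively with the $m$ all-filler ones; equating sums and using that $D$ is large gives $s(T_\ell) = B$ for every $\ell$, so $T_1,\dots,T_m$ solve the 3-\textsc{partition} instance.

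The step I expect to be the main obstacle is precisely this rigidity analysis in the converse: the matching condition permits the $2m$ quadruplets to use several different sum values, so it is genuinely weaker than requiring all sums to coincide, and the whole reduction hinges on choosing $R$, $D$ and the Sidon set so that no unintended combination of markers, fillers and $a_i$'s can ever produce two equal-sum quadruplets. The remaining ingredients --- NP membership, the forward direction, the polynomial size bound, and the existence of a suitable polynomially bounded Sidon set --- are routine.
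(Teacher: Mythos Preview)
Your construction is the same as the paper's: the $3m$ numbers $a_i$, $m$ distinct ``markers'', and $4m$ ``fillers'' (four copies at each marker level), with the intended pairing $\{\rho_\ell,a,a,a\}\leftrightarrow\{\gamma_\ell,\gamma_\ell,\gamma_\ell,\gamma_\ell\}$. The paper takes the $j$th marker to be $5^{j+1}B$ and argues rigidity via base-$5$ digits; you replace this by $R+Ds_\ell$ with a Sidon-type set $\{s_\ell\}$, which keeps all numbers polynomially bounded --- the point that actually matters for the word ``strongly'' (the paper's powers of $5$ grow like $5^{m}$ and do not achieve this).

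There is, however, a genuine gap in your rigidity sketch. The step ``one checks type by type that two distinct such quadruplets cannot have equal sums (their marker parts differ by a multiple of $D$ \ldots)'' breaks down for every type with $k=0$. Two quadruplets of type $(0,0,4)$ carry neither markers nor fillers, so their $D$-parts are both zero and nothing in your construction prevents their $a_i$-sums from coinciding; the same happens for $(0,l,4-l)$ with $0<l<4$ whenever the two quadruplets use the same multiset of fillers. Hence your conclusion ``every quadruplet is all-filler or one-marker-three-$a_i$'' does not follow from the type scan.

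The reduction itself is fine; only the route to rigidity needs changing. Argue from the markers outward (this is, in effect, what the paper does). For matched $Q,Q'$ the Sidon condition at the $D$-scale forces
$4\,[\rho_\ell\in Q]+\mathrm{mult}(\gamma_\ell,Q)=4\,[\rho_\ell\in Q']+\mathrm{mult}(\gamma_\ell,Q')$ for every $\ell$. If $\rho_\ell\in Q$ then the left side is at least $4$ while $\rho_\ell\notin Q'$, so $\mathrm{mult}(\gamma_\ell,Q')=4$, i.e.\ $Q'=\{\gamma_\ell,\gamma_\ell,\gamma_\ell,\gamma_\ell\}$; equality at each $\ell'\ne\ell$ then forces $Q=\{\rho_\ell,a,a,a\}$. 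This already places all $m$ markers, all $4m$ fillers and all $3m$ of the $a_i$, exhausting the $2m$ quadruplets, so no stray $(0,0,4)$ pair can occur; from there your sum comparison gives $s(T_\ell)=B$ as intended.
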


\begin{proof}

We reduce 3-\textsc{partition} to 4-\textsc{pp}. Given an instance of 3-\textsc{partition} with $3m$ elements and the target sum $B$, let us consider the instance of 4-\textsc{pp} with the following input data:

(a) all integers from the instance of 3-\textsc{partition}, denoted $a_1, a_2, \ldots, a_{3m}$,

(b) for $j = 1, 2, \ldots, m$, an integer  $5^{j+1}B$,

(c) for $j = 1, 2, \ldots, m$, a group of four integers $(5^{j+1} + 1)B/4$.

In the obtained instance we have $8m$ integers and the solution consists of $2m$ disjoint quadruplets.

Suppose the instance of 3-\textsc{partition} is positive. Denote its solution ($m$ triplets of integers): $A_1, A_2, \ldots, A_m$. Then clearly the obtained instance of 4-\textsc{pp} is positive, since we take each triplet of integers $A_j$ and add one integer $5^{j+1}B$ from the set (b) to obtain a 4-set with the sum $s(A_j \cup \{ 5^{j+1}B \}) = (5^{j+1} + 1)B$, for $j=1,\ldots,m$. The corresponding 4-set with an equal sum is just the four integers $(5^{j+1} + 1)B/4$ from the set (c).

Suppose the instance of 3-\textsc{partition} is negative. Then in the set (a) in every possible combination of triplets there would always be at least one triplet with a sum different than $B$, strictly between $3B/4$ and $3B/2$. We argue that it is not possible to match every 4-set with another one, so that these matches are all distinct, and both 4-sets in every pair have equal sums.

Given a 4-set $S$, if there exists a 4-set $S'$ such that $s(S) = s(S')$, we say that $S'$ is a {\it match} for $S$.

We first show that no match is possible for any 4-set containing more than one (b)-element. To see this, consider the representations of the given numbers in numeral system with the base 5, after dividing all the numbers by $B$. Each element from (b) is represented by a distinct digit 1. Numbers $\frac{1}{4}5^k$ are represented by the first $k$ digits 1 in the integral part, numbers $\frac{1}{2}5^k$ are represented by the first $k$ digits 2 in the integral part, and numbers $\frac{3}{4}5^k$ are represented by the first $k$ digits 3 in the integral part, where $k \geq 2$ (all these numbers also have digits in their fractional parts that continue indefinitely). For example, $(5^k + 1)/4$ is represented by $(11 \ldots 1.\bar{2})_5$, where there are $k$ digits 1 in the integral part.

Sum of any number of (b)-elements with up to 3 (c)-elements cannot result in the carry in integral digits, except at the first integral digit and in the fractional digits. Since all $a_i < B/2$, adding elements from the set (a) may affect only the first integral digit and the fractional digits.

It is clear that no match is possible for any 4-set consisting of only (b)-elements, since they are all distinct powers of 5.                                                                                                              

Moreover, if there are three (b)-elements and one element from either (a) or (c) in $S$, then a match would require that $S'$ contains the same four elements as $S$, but the (b)-elements are all unique. 

Denote the elements from (b) by $b(j) = 5^{j+1}B$ and the elements from (c) by $c(j) = (5^{j+1}+1)B/4$. If there are two (b)-elements in $S$, then in order to find a match we need to have $S'$ without any (b)-elements. The only way to accomplish this is to use a (c)-element in $S'$ that can be expressed as the sum of a (b)-element and a (c)-element in $S$, that is, $c(j+1) = c(j) + b(j)$. But then we have $S = \{ b(j), b(j+1), c(j), a_{i_1} \}$ and $S' = \{ c(j+2), a_{i_2}, a_{i_3}, a_{i_4} \}$, and since $B/4 < a_i < B/2$, then no such match is possible. The same is true if we replace $a_{i_1}$ in $S$ by a (c)-element, as $S'$ would then have to contain another (c)-element.

Consequently, only single (b)-elements in any 4-set $S$ are possible. But again, examining a base-5 representation of $s(S)$ containing element $b(j)$, we can see that a match for $S$ must have the same digit at the $(j+1)$-th position set. This can be obtained only in two ways: 

(i) by using $c(j+1)=b(j)+c(j)$ and matching $S$ with a set $S'$ containing element $c(j+1)$,

(ii) by matching $S$ with $S'$ consisting of four elements $c(j)$. 

Suppose we have one match of the former type for a 4-set $S$ containing $b(j)$. Then element $c(j+1)$ cannot be used in a match of the second type, because all four elements $c(j+1)$ are needed for that and one has already been used. But then $b(j+1)$ needs to be matched to a set containing $c(j+2)$. Continuing this reasoning, we conclude that we would need to use the first type of match for a set with the element $b(m)$, but this is impossible, since element $c(m+1)$ does not exist. 

Thus a set $S$ containing $b(j)$ needs to be matched to $S'$ consisting of four elements $c(j)$, for all $j = 1,\ldots,m$. Then $s(S')$ has exactly two digits 1 in the base-5 representation: the first digit, and the $(j+1)$-th digit. But to have the first digit 1 in $s(S)$, there must be three (a)-elements in $S$ that sum up to $B$.

Since there are $m$ (b)-elements, then $m$ 4-sets must contain a single (b)-element. But that requires that there are also $m$ triplets of (a)-elements that sum up to $B$ each. This contradicts the assumption that there is no solution for the considered instance of the 3-\textsc{partition} problem.

\end{proof}

\begin{theorem}\label{thm:main}
	The \textsc{interval minmax regret} $P||\sum C_i$ problem, in which the number of machines is given as a part of the input, is strongly NP-hard.
\end{theorem}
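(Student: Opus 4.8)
The plan is to reduce from the 4-\textsc{pp} problem, which Theorem~\ref{thm:4pp} shows is strongly NP-complete. Let a 4-\textsc{pp} instance consist of $4M$ positive integers $a_1,\dots,a_{4M}$ (to be split into $M$ quadruples), and put $\Sigma:=\sum_i a_i$. A fixed-point-free involution on $\{1,\dots,M\}$ exists only for even $M$, so if $M$ is odd the instance is trivially negative and we map it to a fixed negative scheduling instance; otherwise, after doubling every $a_i$ we may also assume $\Sigma$ even. I would then build an instance of \textsc{interval minmax regret} $P||\sum C_i$ with $M/2$ machines and $n:=9M/2$ jobs, all processing-time intervals sharing a common midpoint $c$ large enough to keep endpoints nonnegative (harmless, since the formulation is valid for arbitrary integer endpoints): $4M$ \emph{real} jobs with half-widths $a_1,\dots,a_{4M}$, and $M/2$ \emph{separator} jobs, each with half-width $H:=2\Sigma+1$. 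Every number is polynomially bounded in the size of the 4-\textsc{pp} instance, so a polynomial reduction gives strong NP-hardness.

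Next I would rewrite the robust optimum combinatorially. By Lemma~\ref{lem:1} some optimal robust schedule places $n_0:=n/(M/2)=9$ jobs on each machine, and since $Z(\pi)=\sum_j Z_j(\pi_j)$, the robust optimum equals $\min_{\mathcal P}\sum_{G\in\mathcal P}Z_1^*(G)$, the minimum over partitions $\mathcal P$ of the $n$ jobs into $M/2$ nine-element groups of the sum of the single-machine optima $Z_1^*(G)$ of \textsc{interval minmax regret} $1||\sum C_i$ (Lemma~\ref{lem:optimal-form} moreover says each machine then runs a uniform sequence). Since $n_0=9$ is odd ($k=4$) and all midpoints coincide, Formula~\eqref{r2} evaluates each $Z_1^*(G)$, and with $W$ the (fixed) total of all half-widths we obtain
$$
\text{robust optimum}\;=\;4W+\min_{\mathcal P}\ \sum_{G\in\mathcal P}\max\{P_1^{G},P_2^{G}\},
$$
where $(P_1^{G},P_2^{G})$ is a most balanced splitting of the eight smallest half-widths of $G$ into two four-element sums; write $D_G:=|P_1^{G}-P_2^{G}|$.

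The core of the argument is to show that this minimum detects a valid 4-\textsc{pp} solution. Since $\sum_G D_G\le\Sigma$, any partition that puts exactly one separator on each machine gives $\sum_G\max\{P_1^{G},P_2^{G}\}=\tfrac12(\Sigma+\sum_G D_G)\le\Sigma$, whereas a partition with two or more separators on some machine makes that machine's term at least $H/2=\Sigma+\tfrac12$; so $H=2\Sigma+1$ forces every machine of the optimum to carry exactly one separator, which — being uniquely the widest job there — sits in the middle and leaves that machine's eight smallest half-widths equal to its eight real jobs. Hence the partitions that matter are precisely the partitions of the $4M$ real jobs into $M/2$ octuples, and the robust optimum equals $4W+\tfrac12\Sigma+\tfrac12\min_{\mathcal P}\sum_G D_G$. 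Finally I would establish the equivalence: the 4-\textsc{pp} instance is positive iff the $a_i$ can be partitioned into $M/2$ octuples each of which splits into two four-element subsets of equal sum — pairing quadruple $A_i$ with $A_{f(i)}$ makes such an octuple out of a 4-\textsc{pp} solution, and the two halves of each octuple recover the matched quadruples. Thus $\min_{\mathcal P}\sum_G D_G=0$ exactly when the instance is positive; otherwise, since $D_G\equiv s(G)\pmod 2$, the quantity $\sum_G D_G$ is a positive even number and hence at least $2$. Therefore the robust optimum equals $T:=4W+\tfrac12\Sigma$ if the 4-\textsc{pp} instance is positive and is at least $T+1$ otherwise, which completes the reduction and proves Theorem~\ref{thm:main}.

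The step I expect to be hardest is the third paragraph: one must calibrate the separator half-width $H$ large enough to exclude the ``unbalanced'' separator placements yet small enough to stay polynomial, and one must verify the octuple reformulation of 4-\textsc{pp} together with the parity argument that creates the clean unit gap between positive and negative instances. Everything else — reducing $P||\sum C_i$ to independent single machines and evaluating the single-machine regret — is handed to us by Lemmas~\ref{lem:1}, \ref{lem:exchanges}, \ref{lem:optimal-form} and Formula~\eqref{r2}.
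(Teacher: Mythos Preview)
Your proposal is correct and follows essentially the same route as the paper: reduce from 4-\textsc{pp}, build $M/2$ machines carrying nine equal-midpoint jobs each (the $4M$ ``real'' jobs plus $M/2$ wide separators), invoke Lemmas~\ref{lem:1} and~\ref{lem:optimal-form} together with formula~\eqref{r2} to decompose the robust optimum into single-machine optima, and argue that the threshold value is hit exactly when the 4-\textsc{pp} instance is positive. The only notable differences are cosmetic: the paper takes the separator half-width to be any $B$ exceeding the sum of the four largest $a_i$ and proves the ``one separator per machine'' property by a job-exchange argument, whereas you fix $H=2\Sigma+1$ and prove it by a direct bound on $\sum_G\max\{P_1^G,P_2^G\}$; and your parity/doubling preprocessing manufactures a clean unit gap, while the paper is content with a strict inequality in the negative case. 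Both arguments establish the same reduction.
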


\begin{proof}

Given an instance of 4-\textsc{pp} with a set of $4m'$ integers $a_1, \ldots, a_{4m'}$, let us construct an instance of the \textsc{interval minmax regret} $P||\sum C_i$ problem with $m=m'/2$ machines and $n = 4m' + m'/2$ jobs (we assume w.l.o.g. that $m'$ is even): for each integer $a_i$ in the input data of 4-\textsc{pp} we add a job with a processing interval $[B-a_i, B+a_i]$ to the set of jobs, and additionally, we add $m$ jobs with processing intervals $[0, 2B]$, where $B$ is an integer greater than the sum of 4 largest integers in the given input data.

Each processing interval is a subset of $[0,2B]$, and midpoints of all processing intervals are the same and equal to $B$., i.e., $(p_i^- + p_i^+)/2 = B$.

We use Lemma~\ref{lem:optimal-form} and property \eqref{r2}, which imply that a single machine with 9 jobs yields the optimal maximum regret:
$$
	Z_j(\pi_j) = 4(\sum_{i=1}^9 a_{\pi_j(i)}) + \max \{ s(A_{j1}), s(A_{j2}) \},
$$
where $s(A_{j1}) = a_{\pi_j(1)} + a_{\pi_j(2)} + a_{\pi_j(3)} + a_{\pi_j(4)}$, and $s(A_{j2}) = a_{\pi_j(6)} + a_{\pi_j(7)} + a_{\pi_j(8)} + a_{\pi_j(9)} $.

Denote $C = \sum_{i=1}^{4m'}a_i$. We claim that the instance of 4-\textsc{pp} has a solution if and only if $Z(\pi^*) = 4mB + 9C/2$.

Assume that there exists a solution of 4-\textsc{pp}.

We know that for each single machine, given 9 jobs on that machine, the minimal maximum regret is obtained when the widest job is in the middle of the schedule (position 5), and 4-element subsets on both sides of the widest job have equal sums (this is always possible if there exists a solution of 4-\textsc{pp}: we take any two quadruplets of jobs with matching sums of values $a_i = p_i^+ - B = p_i^- + B$). 

We show that in an optimal robust solution exactly one job $[0, 2B]$ must be assigned to each machine (such job is consequently always in the middle of each permutation). Then given two 4-job sets with an equal sum $B_j = s(A_{j1}) = s(A_{j2})$ on each machine $j$, with $B_j \leq B$, we obtain the total maximum regret:
$$
	Z^* = \sum_{j=1}^m \left( 4(2B_j + B) + B_j) \right) = 4mB + 9 \sum_{j=1}^m B_j = 4mB + 9C/2.
$$

Suppose for the sake of contradiction that a schedule in which at least one machine is not assigned any $[0,2B]$ jobs is optimal robust. By the pigeonhole principle, there must be a machine with more than one $[0,2B]$ job. Denote by 1 the machine with more than one $[0,2B]$ jobs, and by 2 the machine with no $[0,2B]$ jobs. 

Let us construct another schedule by exchanging one $[0,2B]$ job from machine 1 with job $[B-a, B+a]$ from machine 2, $a < B$.

Let $Z_1$ be the regret on machine 1 before the exchange, and $Z_1'$ be the regret on machine 1 after the exchange:
\begin{align}
	Z_1 = 4(s(A_{11}) + B + s(A_{12})) + s(A_{11}), \nonumber
\\
	Z_1' = 4(s(A_{11} \setminus \{ B \} \cup \{ a \}) + B + s(A_{12})) + s(A_{11} \setminus \{ B \} \cup \{ a \}). \nonumber
\end{align}
After the exchange the regret on machine 1 has decreased by $5(B-a)$. Similarly, let $Z_2$ be the regret on machine 2 before the exchange, and $Z_2'$ be the regret on machine 2 after the exchange:
\begin{align}
	Z_2 = 4(s(A_{21}) + a' + s(A_{22})) + s(A_{21}), \nonumber
\\
	Z_2' = 4(s(A_{21} \setminus \{ a \} \cup \{ a' \}) + B + s(A_{22})) + s(A_{21} \setminus \{ a \} \cup \{ a' \}), \nonumber
\end{align}
where $a'$ corresponds to the widest interval $[B-a', B+a']$ of a job on machine 2 before the exchange, $a' < B$. After the exchange the regret on machine 2 has increased by $4(B-a) - a + a'$. Consequently, the new schedule has less total regret by $B - a'$, thus the initial schedule could not be optimal. By repeating the above job exchange operation we conclude that an optimal schedule must have exactly one $[0,2B]$ job on each machine.

Now assume that there is no solution of 4-\textsc{pp}. Then it is not possible to have two 4-job sets with an equal sum on each of $m$ machines, and consequently, at least one machine gives the regret strictly greater than $4B + 9B_j$. Let this be the machine with index $j=1$. To that machine there are assigned two 4-job sets, with the corresponding values $a_i$ denoted $A_{11}$ and $A_{12}$ respectively, so that $s(A_{11}) > s(A_{12})$. Let us denote the total value of these 8 jobs $s(A_{11})+s(A_{12}) = 2B_1$, for some $B_1 > 0$. The regret generated by this machine is:
$$
	Z_1 = 4(s(A_{11}) + s(A_{12}) + B) + s(A_{11}) = 4B + 8B_1 + s(A_{11}).
$$
Since $s(A_{11}) > B_1$, and $\sum_{j=1}^m B_j = C/2$, the total regret is:
\begin{align}
	Z_1 + \sum_{j=2}^m (4B + 9B_j) = 4mB + 9 \sum_{j=2}^m B_j + 8 B_1 + s(A_{11}) \nonumber
\\
	> 4mB + 9 \sum_{j=1}^m B_j = 4mB + 9C/2 = Z^*. \nonumber
\end{align}
It follows that given a polynomial time algorithm for \textsc{interval minmax regret} $P||\sum C_i$ we would be able to decide 4-\textsc{pp} in polynomial time.
\end{proof}

\section{Conclusions}

In this paper, we proved the strong NP-hardness of the \textsc{interval minmax regret} $P||\sum C_i$ problem, the minmax regret version of one of the basic multiprocessor scheduling problems. It was shown how to compute the maximum regret of a schedule in polynomial time, and how the problem on parallel identical machines relates to its single machine variant. An interesting open problem is to settle the complexity status of the single machine version of the considered problem, when the input data is encoded in unary. Another future research direction is to design approximation algorithms for robust scheduling problems that guarantee approximation ratio below two \cite{kasperski20082}, or prove that it is impossible (unless P=NP). 

%\section*{References}

%\bibliographystyle{plain}
%\bibliography{nsumC}

\end{document}